\documentclass{article}

\usepackage[english]{babel}

\usepackage[letterpaper,top=2cm,bottom=2cm,left=3cm,right=3cm,marginparwidth=1.75cm]{geometry}

\usepackage{amsmath}
\usepackage{graphicx}
\usepackage{amssymb}
\usepackage{amsthm}
\usepackage{subcaption}
\newtheorem{theorem}{Theorem}
\newtheorem{proposition}{Proposition}
\usepackage[boxed]{algorithm2e}
\usepackage{algpseudocode}
\usepackage{enumerate}
\usepackage[colorlinks=true, allcolors=blue]{hyperref}
\newcommand{\parallelsum}{\mathbin{\!/\mkern-5mu/\!}}

\title{Evaluation of acoustic Green's function in rectangular rooms with general surface impedance walls}
\author{Matteo Calaf\`a\footnote{\url{matcal@dtu.dk}}, Yuanxin Xia, Jonas Brunskog, Cheol-Ho Jeong\\\\
Acoustic Technology, Department of Electrical and Photonics Engineering, \\Technical University of Denmark, 2800 Kongens Lyngby, Denmark}
\date{}
\begin{document}
\maketitle

\begin{abstract}
Acoustic room modes admit closed-form expressions for rectangular rooms with perfectly reflecting walls, from which the Green's function can be computed directly through the eigenfunction expansion. First-order approximations also exist for nearly rigid boundaries; however, current analytical methods fail to accommodate more general boundary conditions, e.g., when wall absorption is significant. In this work, we present a comprehensive analysis that extends previous studies by including additional first-order asymptotics that account for soft-wall boundaries. In addition, we introduce a semi-analytical, efficient, and reliable method for computing the Green's function in rectangular rooms, which is described and verified through numerical tests. The resulting error decreases rapidly for sufficiently large truncation order, making the method suitable as a benchmark for numerical simulations. Additional aspects regarding the spectral basis orthogonality and completeness are also addressed, providing a general framework for the validity of the proposed approach.
\end{abstract}

\section{\label{sec:introduction} Introduction}
The room impulse response (RIR) plays a fundamental role in room acoustics, as it allows the derivation of several key acoustic parameters, including reverberation time and clarity \cite{ISO3382-1:2009}, and it generally incorporates comprehensive information of the enclosed space \cite{vorlander2008auralization,kuttruff2016room}. Moreover, under the assumption of a linear time-invariant (LTI) system, convolving the RIR with an arbitrary input signal directly yields the corresponding output sound field. The RIR can be obtained from the Green's function, defined as the solution to the Helmholtz equation under a Dirac delta source \cite{duffy2015green,jacobsen2013fundamentals,okoyenta2020short}. Consequently, the computation of Green's functions is of paramount practical importance in many acoustic applications.

Besides established numerical techniques such as the finite element method (FEM) \cite{harari2006survey} and the boundary element method (BEM) \cite{preuss2022recent}, Green's functions can also be obtained through spectral representations based on expansions in terms of an orthogonal basis of room eigenfunctions \cite{sztranyovszky2025extending, habermann2025asymptotic}. However, this approach has several limitations: the eigenfunction set is infinite and typically unavailable in closed form. As a result, the eigenfunction expansion (EE) must be truncated and the single terms are usually computed numerically, leading to substantial computational cost, particularly when a large number of eigenfunctions is required.

In certain cases, explicit formulas for the eigenfunctions are available and can be exploited to construct the Green's function efficiently. For example, \cite{perez2010green} derived the series representation for the boundary value problem in the cylindrical domain. The most notable example, however, is the rectangular room with perfectly reflecting boundaries, for which closed-form modal solutions are well documented in classical acoustics literature \cite{morse_theoretical_1968, jacobsen2013fundamentals, kuttruff2016room}. Based on this, approximations can be obtained for walls that are not perfectly rigid. Specifically, \cite{morse_sound_1944} is among the earliest studies of first-order asymptotic expansions of eigenfunctions in rectangular rooms with surface admittance boundary conditions. More recently, \cite{nolan_two_2019} revisited underlying aspects of the derivation, including the orthogonality of the basis. Nevertheless, although these studies account for non-rigid boundaries, they remain limited to sufficiently hard walls, as they rely on asympotics that assume the admittance to lie in a neighborhood of the origin.

Formulations applicable to more general surface admittance values have also been proposed, although they typically rely on numerical schemes that reduce the interpretability of the solutions and require iterative procedures. For example, \cite{bistafa2003numerical} introduced an iterative approach in which the eigenvalue solutions are obtained for progressively increasing impedance values. In contrast, \cite{naka2005acoustic} employed an interval Newton/generalized bisection (IN/GB) method, capable of guaranteeing the detection of all solutions within prescribed bounds. Finally, \cite{du2011acoustic} developed an algorithm based on the Rayleigh–Ritz method, where the eigenfunctions are represented as Fourier series expansions.

This work aims to derive the Green's function through its eigenfunction expansion. The proposed formulation extends the Morse solution beyond hard walls, while still yielding interpretable first-order asymptotic approximations of the solutions. As a result, the theory provides the basis for an efficient and reliable algorithm for evaluating the Green’s function.

The manuscript is organized as follows: Section~\ref{sec:physicalmodel} introduces the physical models and the related mathematical problems under investigation. Section~\ref{sec:analysis} contains the main contribution of this work, namely the derivation of first-order eigenvalue approximations and the resulting method to evaluate the Green's function. Later, Section~\ref{sec:validityof} examines the assumptions of basis orthogonality and completeness required for the application of the eigenfunction expansion. An auxiliary eigenvalue problem is studied in Section~\ref{sec:Auxiliary eigenvalue problem}, which provides additional insight on the characterization of the Green's function in rectangular rooms. Finally, some numerical and experimental tests are performed in Section~\ref{sec:numericaltests} and conclusions are reported in Section~\ref{sec:conclusions}.

\section{\label{sec:physicalmodel} Problem formulation}
\subsection{Eigenfunction expansion for the Green's function}
We consider a connected and bounded domain $\Omega\subset \mathbb{R}^d$ in $d\in\mathbb{N}$ spatial dimensions. Under the time-harmonic condition, the acoustic field can be described by the Helmholtz equation with given boundary conditions (BCs). Specifically, imposing the normalized surface admittance $\beta:\partial\Omega\rightarrow\mathbb{C}$ (or, equivalently, the normalized surface impedance $\zeta=1/\beta$) on the boundaries provides a general and accurate modeling framework for a wide range of physical configurations. The resulting boundary value problem (BVP) consists of finding the acoustic pressure $p:\Omega\rightarrow\mathbb{C}$ such that

\begin{equation}\label{eq:bvp}
    \begin{cases}
    \nabla^2 p + k^2 p = s, & \text{ in } \Omega, \\
    \nabla p \cdot \mathbf{n} + i k \beta p = 0, & \text{ on } \partial \Omega,
    \end{cases}
\end{equation}
where $\nabla^2(\cdot)$ is the Laplacian operator, $s:\Omega\rightarrow\mathbb{C}$ is the source term and $k\in\mathbb{R}$ is the excitation wavenumber. In particular, the Green's function is defined as the solution $G_k(\cdot|\mathbf{x}_0):\Omega\rightarrow\mathbb{C}$ to Equation \eqref{eq:bvp} when the source is a Dirac delta function $s(\mathbf{x})=-\delta(\mathbf{x}-{\mathbf{x}_0})$ at $\mathbf{x}_0 \in \Omega$. For simplicity, dissipation in the medium is neglected, although this effect can be incorporated by allowing $k$ to have a nonzero imaginary part. The particular case $\beta=0$ corresponds to a Neumann BC, modeling perfectly rigid walls that reflect incident waves without dissipation or phase delay.

We introduce the associated \emph{eigenvalue problem} as follows: given $k\in\mathbb{R}$, find $\hat{k}\in\mathbb{C}$ and $\varphi\neq 0$ such that
\begin{equation}\label{eq:eigenvalueproblem}
    \begin{cases}
    \nabla^2 \varphi + \hat{k}^2 \varphi = 0, & \text{ in } \Omega, \\
        \nabla \varphi \cdot \mathbf{n} + i k \beta \varphi = 0, & \text{ on } \partial \Omega.
    \end{cases}
\end{equation}
Namely, the eigenpair $(\hat{k},\varphi)$ is identified by the eigenvalue $\lambda:=-\hat{k}^2$ of the Laplacian operator and the eigenfunction $\varphi$. Eigenpairs are introduced as they provide a means to represent the Green's function. Specifically, under some conditions discussed more in detail in Section \ref{sec:validityof}, the following eigenfunction expansion (EE) holds:
\begin{equation}\label{eq:greenfunctionseries}
    G_{k}(\mathbf{x}|\mathbf{x}_0) = \sum_{n=1}^\infty\frac{\varphi_n(\mathbf{x})\varphi_n(\mathbf{x}_0)}{\Lambda_n(\hat{k}_n^2-k^2)},
\end{equation}
where $\{(\hat{k}_n,\varphi_n)\}_{n=1}^\infty$ denote all distinct solutions to Equation (\ref{eq:eigenvalueproblem}) and $\Lambda_n:=\int_\Omega \varphi^2_n \, d\mathbf{x}$ is a normalization constant.

Note that $\beta$, $s$, and the eigenpairs generally depend on $k$. This dependence is omitted for notational brevity, also because $k$ is assumed fixed throughout the following sections unless stated otherwise. Finally, while $k$ is real-valued, $\hat{k}$ may have a nonzero imaginary part, physically representing the acoustic energy dissipation introduced by $\beta \neq 0$.

\subsection{Problem formulation in rectangular rooms}\label{sec:problemformulationinrectangularrooms}

We restrict the geometry $\Omega$ to the rectangular room and assume $\beta$ to be spatially uniform on each wall side. This implies that $\beta$ can be identified by a vector of dimension $2d$. Under these assumptions, the technique of separation of variables (SoV) can be applied, and the multi-dimensional problem in Equation~\eqref{eq:eigenvalueproblem} can be reduced to a 1D problem ($d=1$) for each single axis. In particular, one utilizes
\begin{gather*}
    \varphi_n(\mathbf{x}) = \prod_{j=1}^d \varphi_{n_j}(\mathbf{x}_j), \hspace{5mm}
    \hat{k}^2_
    n= \sum_{j=1}^d \hat{k}_{n_j}^2,
\end{gather*}
where $d$ is the space dimension, $n$ is now a multi-index $n=(n_1,n_2,\dots,n_d)\in \mathbb{N}^d$, and each term on the right hand side is obtained from the 1D problem. 
The described SoV strategy follows the approach adopted in several previous studies, to which the interested reader is referred for further details \cite{morse_theoretical_1968,bistafa2003numerical,nolan_two_2019}. Accordingly, the remainder of this section focuses exclusively on the 1D problem.

Let us define $l\in\mathbb{R}^+$ the room size along the axis under consideration and $\beta_-,\beta_+\in\mathbb{C}$ the normalized surface admittance on the left and right wall, respectively. The general solution to the 1D homogeneous Helmholtz equation can be written as \begin{equation}\label{eq:generalsolution1D}
    \varphi(x) = \cos\left(\frac{\pi}{l}\hat{q}x + \hat{b}\right), \quad x\in \left[-\frac{l}{2},\frac{l}{2}\right],
\end{equation}
where $\hat{q},\hat{b}\in\mathbb{C}$ need to be found by imposing the boundary conditions. In particular, $\hat{q}=\hat{k}l/\pi$ is proportional to $\hat{k}$ and the normalization constant in Equation \eqref{eq:greenfunctionseries} can be obtained explicitly as
\begin{equation}\label{eq:lambda}
    \Lambda = \frac{l}{2}\left(1 + \frac{\sin\left(\pi \hat{q}\right)\cos\left(2\hat{b}\right)}{\pi \hat{q}}\right).
\end{equation}
Imposing Equation~\eqref{eq:generalsolution1D} into Equation~\eqref{eq:eigenvalueproblem}, leads to the condition on $\hat{q}$:
\begin{equation}\label{eq:main_condition}
    \left( (\pi \hat{q})^2 + \gamma_-\gamma_+\right) \tan \left(\pi \hat{q}\right) = i (\gamma_-+\gamma_+)(\pi \hat{q}),
\end{equation}
where $\gamma_\pm:=\beta_\pm kl$. $\hat{b}$ can instead be found by solving 
\begin{equation}\label{eq:b}
    \pi \hat{q} \tan\left(\frac{\pi}{2}\hat{q} + \hat{b}\right) = i\gamma_+,
\end{equation} given $\hat{q}$. Let us denote with $v(\hat{q})$ the residual of Equation~\eqref{eq:main_condition}; $v(\hat{q})=0$ represents the main problem to address in the following sections, as eigenpairs can be directly inferred from $\hat{q}$ and SoV can be employed to extend the solution to 2D or 3D rectangular rooms. Specifically, the equation is transcendental and admits no closed-form solution except in a few special cases. Among these, we highlight the choice of $\beta_-=\beta_+=0$, which reduces to the problem with perfectly rigid walls and whose solutions are $\hat{q}\in\mathbb{Z}$, as established by classical theory. Note that solutions are real-valued only in the perfect rigid scenario, whereas $\hat{q}$ is generally complex otherwise.

We emphasize that Equation \eqref{eq:main_condition} can be found in the literature in several related forms. See, for instance, Equation (5.4) in \cite{morse_sound_1944},  Equation (12) in \cite{bistafa2003numerical}, Equation (5) in \cite{naka2005acoustic} and Equation (8) in \cite{nolan_two_2019}.

\section{\label{sec:analysis} Derivation of the method}
In this section, we formulate the method to evaluate the Green's function from its eigenfunction expansion. Section \ref{sec:Properties of eigenvalue solutions} and \ref{sec:firstorderasymptotics} are devoted to the analysis and approximation of the solutions to Equation~\eqref{eq:main_condition}, respectively. Such approximated solutions are then used in Section \ref{sec:Evaluation of eigenfunction expansion} for evaluating the Green's function.

\subsection{Properties of 1D eigenvalue solutions}\label{sec:Properties of eigenvalue solutions}
Some properties of Equation~\eqref{eq:main_condition} can be readily deduced: $\hat{q}=0$ is a solution and $+\hat{q}$ is a solution if and only if $-\hat{q}$ is. This is expected from the fact that $\lambda=-\hat{k}^2$ is the eigenvalue of the problem. For such reason, we define $\mathcal{Q}\equiv \mathcal{Q}_-\cup\mathcal{Q}_+\cup\{0\}$ the total set of solutions, by discerning the solution copies on the left and right side of the complex plane, respectively. In particular, we focus on only one side, say $\mathcal{Q}_+$. Furthermore, let $\#_m \mathcal{Q_+}$ denote the number of solutions $\hat{q}$ in $\mathcal{Q}_+$ such that $|\hat{q}|\le m+\frac{1}{2}$.

 We proceed by proving the following.

\begin{theorem}[Number of solutions]\label{theo:1}
There exists $m_0\in\mathbb{N}$ such that for any $m\ge m_0$: \begin{equation*}
    \#_{m}\mathcal{Q}_+ = \begin{cases}
        m, & \text{if } \gamma_-\gamma_+=i(\gamma_-+\gamma_+), \\
         m + 1, & \text{otherwise}.
    \end{cases}
\end{equation*}
\end{theorem}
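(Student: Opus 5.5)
The plan is to apply Rouché's theorem to an entire function whose zeros coincide with those of $v$. Set $z=\pi\hat q$, $P:=\gamma_-\gamma_+$ and $S:=\gamma_-+\gamma_+$. Multiplying Equation~\eqref{eq:main_condition} by $\cos z$ gives
\begin{equation*}
f(z):=(z^2+P)\sin z - iSz\cos z .
\end{equation*}
This function is entire and odd. Away from the zeros of $\cos z$, its zeros are exactly the solutions of $v=0$. At a point with $\cos z=0$ we have $f(z)=\pm(z^2+P)$, which vanishes only if $z^2=-P$. This can happen for at most one pair $\pm z$, and such a point is a spurious root that must be discarded. It has fixed modulus, so it does not change the count once $m$ is large.

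\textbf{Step 1: count all zeros of $f$ in the disc.} Counting with multiplicity, we count the zeros of $f$ in the disc $|z|\le \pi(m+\tfrac12)$, i.e.\ $|\hat q|\le m+\tfrac12$. We compare $f$ with $g(z)=z^2\sin z$. Inside the disc, $g$ has a triple zero at $0$ and simple zeros at $\pm\pi n$ for $n=1,\dots,m$, for a total of $2m+3$.

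On the circle $C_m=\{|z|=\pi(m+\tfrac12)\}$ we use the standard fact that $|\cot z|\le K$ uniformly in $m$. Indeed, these circles stay a fixed distance from the zeros $\pi\mathbb Z$ of $\sin z$. In addition, $|\cot z|\to1$ as $|\operatorname{Im} z|\to\infty$. Hence on $C_m$
\begin{equation*}
|f-g|\le |P||\sin z| + |S||z||\cos z| \le \bigl(|P|+K|S||z|\bigr)|\sin z| < |z|^2|\sin z| = |g| .
\end{equation*}
The strict inequality holds once $m\ge m_0$, where $m_0$ depends only on $P$, $S$ and $K$. By Rouché's theorem, $f$ has exactly $2m+3$ zeros in the disc. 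Choosing the radius $\pi(m+\tfrac12)$ also guarantees that no zero lies on the circle itself.

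\textbf{Step 2: the zero at the origin.} A Taylor expansion gives
\begin{equation*}
f(z)=(P-iS)\,z+\bigl(1-\tfrac P6+\tfrac{iS}{2}\bigr)z^3+O(z^5).
\end{equation*}
If $P\neq iS$, the origin is a simple zero. There remain $2m+2$ nonzero zeros, which split symmetrically under $z\mapsto -z$, so $\#_m\mathcal Q_+=m+1$.

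If $P=iS$, the cubic coefficient equals $1+P/3$. Generically this is nonzero, so the origin is a triple zero. Then $2m$ nonzero zeros remain, giving $\#_m\mathcal Q_+=m$. This is exactly the dichotomy stated in Theorem~\ref{theo:1}.

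\textbf{Main obstacle: bookkeeping conventions.} The Rouché estimate is routine; the real difficulty lies in the conventions needed to make the count precise.
\begin{enumerate}[(i)]
\item Solutions must be counted with multiplicity. Equivalently, one should note that double roots occur only for exceptional $(P,S)$.
\item $\mathcal Q_+$ must be defined as one representative of each pair $\pm\hat q$. This matters in particular for purely imaginary roots.
\item The degenerate sub-case $P=iS=-3$ must be handled separately. There the origin has order at least $5$, and the count drops to $m-1$.
\item The possible spurious root with $\cos z=0$ must be excluded.
\end{enumerate}
I would state these conventions explicitly, or exclude the degenerate parameter values, so that the Rouché count translates cleanly into the claimed $m$ versus $m+1$.
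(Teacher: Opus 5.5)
Your argument is essentially the paper's proof: Rouché's theorem on the circles $|\hat q|=m+\tfrac12$ with a uniform bound on $|\cot(\pi\hat q)|$, then subtracting the multiplicity of the zero at the origin and halving by odd symmetry. The only cosmetic difference is the comparison function. The paper uses $v_1=((\pi\hat q)^2+\gamma_-\gamma_+)\sin(\pi\hat q)$, whose zeros $\mathbb{Z}\cap B_m$ and $\pm\sqrt{-\gamma_-\gamma_+}/\pi$ give the same total $2m+3$ as your $z^2\sin z$.

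Your observation that $P=iS=-3$ makes the origin a zero of order five, so the count drops to $m-1$, is correct. This is a case the paper's ``triple multiplicity'' claim misses.

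One remark is wrong, however. A zero of $f$ with $\cos z=0$ and $z^2=-P$ has fixed modulus, so it lies inside the disc for every large $m$. Discarding it would therefore lower $\#_m\mathcal{Q}_+$ by one, not leave it unchanged. In fact, for $z=\hat k l\neq 0$, $f$ is, up to a constant factor, the characteristic determinant of the 1D Robin problem in the basis $\cos(\hat k x),\sin(\hat k x)$. Such a point is thus a genuine eigenvalue and should be kept. Your Steps 1--2 already keep it, as does the paper by working with $v_1+v_2$.
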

\begin{proof}
Define $v_1(\hat{q})=((\pi \hat{q})^2+\gamma_-\gamma_+)\sin(\pi \hat{q})$ and $v_2(\hat{q}) =-i(\gamma_-+\gamma_+)(\pi \hat{q}) \cos(\pi \hat{q})$, so that Equation~\eqref{eq:main_condition} can be written as $v(\hat{q})=v_1(\hat{q}) + v_2(\hat{q}) = 0$. Then, we have
\begin{equation*}
    \sup_{\hat{q}\in \partial B_{m}} \frac{|v_2(\hat{q})|}{|v_1(\hat{q})|} \le \frac{|\gamma_-+\gamma_+|\pi(m+\frac{1}{2})}{\pi^2(m+\frac{1}{2})^2 - |\gamma_-\gamma_+|} \sup_{\hat{q}\in \partial B_{m}} |\cot(\pi \hat{q})|,
\end{equation*}
where $\partial B_m$ denotes the circumference of radius $m+\frac{1}{2}$.
The first factor tends to zero as $m\rightarrow \infty$, while the second factor can be bounded by 2 for any $m$. Thus, by Rouché´s theorem (see, e.g., Theorem 10.43 in \cite{rudinRealComplexAnalysis1986}), the cardinality of $\mathcal{Q}$ restricted to $B_m$ is equal to that of the zeros of $v_1(\hat{q})$, i.e., $\mathbb{Z}\cap B_m$ and $\pm \sqrt{-\gamma_+\gamma_-}/\pi$. Finally, if $\gamma_-\gamma_+=i(\gamma_-+\gamma_+)$, the solution $\hat{q}=0$ has triple multiplicity, single otherwise. By subtraction, the cardinality of $\mathcal{Q}_+$ can be determined.
\end{proof}

In our setting, all solutions to Equation~\eqref{eq:eigenvalueproblem} must be found. Theorem~\ref{theo:1} therefore provides a useful criterion for assessing whether the complete set of eigenvalues has been obtained. It is worth noting that, unlike the perfectly rigid case, the general problem admits an additional nontrivial eigenvalue.

We next establish another useful property, showing that the imaginary part of the solutions remains bounded except in a neighborhood of $\gamma_\pm/\pi$.

\begin{proposition}[Bound on imaginary component]\label{prop:imag}
Let $\hat{q}$ be a solution to Equation~\eqref{eq:main_condition}. Then, for any $L>0$, either $|\text{Im}(\pi\hat{q})| \le L$ or $|\pi \hat{q} - \gamma| \le 2|\gamma| / (e^L-1)$ for some $\gamma\in\{\gamma_-,\gamma_+\}$.
\end{proposition}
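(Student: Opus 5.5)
The plan is to rewrite Equation~\eqref{eq:main_condition} in exponential form, so that the condition $|\text{Im}(\pi\hat q)|>L$ turns into a smallness statement. Set $z=\pi\hat q$ and $w=e^{2iz}$, so that $|w|=e^{-2\,\text{Im}(z)}$. Using $\tan z=-i(w-1)/(w+1)$ (valid when $\cos z\neq 0$; the case $\cos z=0$ is checked directly and is harmless) and clearing denominators, Equation~\eqref{eq:main_condition} becomes
\begin{equation*}
(1-w)\left(z^2+\gamma_-\gamma_+\right)=(1+w)(\gamma_-+\gamma_+)z .
\end{equation*}
Collecting the terms with and without $w$ separately, I expect the key factorization
\begin{equation*}
(z-\gamma_-)(z-\gamma_+)=w\,(z+\gamma_-)(z+\gamma_+).
\end{equation*}
Finding this factorized form is the only real step; everything afterwards is elementary estimation.

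\textbf{Case $\text{Im}(z)>L$.} Here $|w|<e^{-2L}$, so
\begin{equation*}
|z-\gamma_-|\,|z-\gamma_+|\le e^{-2L}\,|z+\gamma_-|\,|z+\gamma_+|.
\end{equation*}
Hence at least one factor obeys $|z-\gamma|\le e^{-L}|z+\gamma|$ for some $\gamma\in\{\gamma_-,\gamma_+\}$; otherwise the product of the two strict reverse inequalities would contradict the display. By the triangle inequality, $|z+\gamma|\le|z-\gamma|+2|\gamma|$. Substituting gives $|z-\gamma|(1-e^{-L})\le 2|\gamma|e^{-L}$, that is,
\begin{equation*}
|z-\gamma|\le \frac{2|\gamma|}{e^L-1},
\end{equation*}
which is the claimed bound.

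\textbf{Case $\text{Im}(z)<-L$.} Here $|w^{-1}|<e^{-2L}$, and the same argument with the roles of the two sides exchanged gives $|z+\gamma|\le 2|\gamma|/(e^L-1)$. Equivalently, this follows from the previous case applied to the solution $-\hat q$, using the symmetry $\hat q\mapsto-\hat q$ noted in Section~\ref{sec:Properties of eigenvalue solutions}. So in this case the proposition holds for the representative $\pm\hat q$ with positive imaginary part. I would state this explicitly, since the proposition is meant up to that symmetry. If one insists on the literal statement for the lower half-plane, the neighbourhoods are those of $-\gamma_\pm/\pi$.

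The main obstacle I anticipate is purely bookkeeping: the degenerate points where $z=\pm\gamma$ or $\cos z=0$, and the sign convention between the two half-planes. The analytic content reduces to the factorization above.
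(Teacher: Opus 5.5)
Your proof is correct and is essentially the paper's argument. The paper's identity $\text{Im}(\pi\hat q)=\tfrac12\ln\bigl(|\pi\hat q+\gamma_-||\pi\hat q+\gamma_+|/(|\pi\hat q-\gamma_-||\pi\hat q-\gamma_+|)\bigr)$ is just the modulus of your factorization $(z-\gamma_-)(z-\gamma_+)=e^{2iz}(z+\gamma_-)(z+\gamma_+)$; the paper then singles out one $\gamma$ and uses the same triangle inequality.

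Your split by the sign of $\text{Im}(z)$ is more careful than the paper. The paper's step from $|\ln(|\pi\hat q+\gamma|/|\pi\hat q-\gamma|)|>L$ to $|\pi\hat q-\gamma|<e^{-L}|\pi\hat q+\gamma|$ silently assumes $\text{Im}(\pi\hat q)>L$. As you note, for $\text{Im}(\pi\hat q)<-L$ the neighbourhoods are those of $-\gamma_\pm/\pi$, and the literal statement does fail there in general (e.g.\ for the mirror $-\hat q$ of a group-3 root when $\gamma_-=\gamma_+$). So, as written, the proposition is guaranteed only for $\text{Im}\,\hat q\ge 0$.
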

\begin{proof}
Equation~\eqref{eq:main_condition} can be written as
    \begin{equation*}
        \text{Im}(\pi\hat{q}) = \frac{1}{2}\ln\left(\frac{|\pi\hat{q}+\gamma_-||\pi\hat{q}+\gamma_+|}{|\pi\hat{q}-\gamma_-||\pi\hat{q}-\gamma_+|}\right).
    \end{equation*}
Therefore, one has $|\text{Im}(\pi\hat{q})| \le |\ln\left(|\pi\hat{q}+\gamma| / |\pi\hat{q}-\gamma|\right)|$ for either $\gamma_-$ or $\gamma_+$. Assume by contradiction that $|\text{Im}(\pi\hat{q})|>L$, then 
\begin{equation*}
    |\pi\hat{q}-\gamma| < \frac{|\pi\hat{q}+\gamma|}{e^L} \le \frac{|\pi\hat{q}-\gamma| + 2|\gamma|}{e^L},
\end{equation*}
where the triangular inequality is applied in the last step. By rearranging the terms, we complete the proof.
\end{proof}

\subsection{First-order asymptotics of 1D eigenvalues}\label{sec:firstorderasymptotics}
Approximations of the solutions to Equation~\eqref{eq:main_condition} can be found through the use of asymptotics of the $\tan(\cdot)$ function. Namely:

\begin{enumerate}[i]
    \item Taylor expansion around zeros: $\text{tan}(\pi\hat{q}) \approx \epsilon$ for $\hat{q}=n+\epsilon$, $n\in\mathbb{N}_0$, $|\epsilon|\ll 1$.
    \item Laurent expansion around poles: $\tan(\pi \hat{q}) \approx -\frac{1}{\pi \epsilon}$ for $\hat{q}=n+\frac{1}{2}+\epsilon$, $n\in\mathbb{N}_0$, $|\epsilon|\ll 1$ .
    \item Limit for imaginary infinity: $\tan(\pi \hat{q}) \approx (i-2ie^{2i\pi \hat{q}})$ for $\text{Im}(\hat{q})\gg 0$.
\end{enumerate}

\begin{table}[ht]
\centering
\begin{tabular}{c|c c c c c}
    \hline \hline
    Group & Interpretation & Assumption &\hspace{4mm} Index $n$ \hspace{4mm}  &  Derivation & Approximated $\hat{q}$\\ 
    \hline 
     1 & Hard walls & $|\gamma_-|,|\gamma_+|\ll n$  & $0,1,2,\dots$ & i, $\Xi$ &  $ \frac{1}{2}\left(n + \sqrt{n^2+4i\frac{\gamma_-+\gamma_+}{\pi^2}}\right)$ \\
     2 & Soft walls & $|\gamma_-|,|\gamma_+|\gg n$  & $1,2,\dots $ & i, $\neg \Xi$ & $n \left(1+\frac{i}{(\gamma_-\parallelsum \gamma_+)-i}\right)$ \\ 
     1P & Asymmetric walls & $ |\gamma_\pm|\gg n,
            |\gamma_\mp| \ll n$ & $0,1,2,\dots $ & ii, $\Xi$ & $\left(n+\frac{1}{2}\right)\left(1+\frac{i}{\gamma_-+\gamma_+}\right)$\\
     \hline
    \rule[-2ex]{0pt}{5.5ex} 3 & Positive susceptance  & $\text{ Im}(\gamma_\pm) \gg 0$ & / &  iii & $\dfrac{\gamma_\pm}{\pi}$ \\
    \hline \hline
\end{tabular}
\caption{Overview of the asymptotic groups for the 1D eigenvalue problem in Equation~\eqref{eq:main_condition}.}
\label{tab:groups}
\end{table}
\vskip1in

In addition, we introduce the condition $\Xi: |\hat{q}|^2 \gtrsim |\gamma_-\gamma_+|$ and its negation $\neg \Xi:  |\hat{q}|^2 \lesssim |\gamma_-\gamma_+|$. Combining each asymptotic regime of $\tan(\cdot)$ with either $\Xi$ or $\neg \Xi$ yields a group of approximate solutions to Equation~\eqref{eq:main_condition}, each corresponding to a specific range of validity. For brevity, the derivations are omitted and the resulting groups are summarized in Table~\ref{tab:groups}. Here, $\gamma_- \parallelsum \gamma_+ := \frac{\gamma_-\gamma_+}{\gamma_-+\gamma_+}$,
which corresponds to the equivalent resistance of two resistors connected in parallel.

Group 1, 2 and 1P are complementary to each other, so that any combination of $\beta_-,\beta_+$ belongs to one of these groups or at their intersection. Specifically, group 1 corresponds to the regime of sufficiently hard walls, group 2 to sufficiently absorbing walls, and group 1P to highly asymmetric boundary conditions, where one wall is hard and the other is absorbing. Group 3 is instead distinct as it provides one or two more solutions based on a fundamentally different assumption. In particular, group 3 directly corresponds to the result stated in Proposition~\ref{prop:imag}.

A practical criterion for selecting the correct group has been established analytically and confirmed numerically: first, the solution(s) from group 3 are valid provided that $\text{Im}(\gamma_{\pm})\gtrsim 1$. Then, one examines whether walls are highly asymmetric by checking that the ratio $|\gamma_\pm|/|\gamma_\mp|\gtrsim 5$. If the condition holds, then for each $n$, the formula from group 2 is employed if $n\lesssim A_{2-1P} := |\gamma_-\parallelsum\gamma_+|/2$, group 1P is instead adopted between $A_{2-1\text{P}}$ and $A_{1\text{P}-1}:=|\gamma_-+\gamma_+|/\pi$, while group 1 is used for $n\gtrsim A_{1_\text{P}-1}$. If walls are instead not highly symmetric, then group 1P should be disregarded; in this case, only groups 1 and 2 are adopted, with the transition occurring at $A_{2-1}:=\sqrt{|\gamma_-\gamma_+|}/\pi$.

For perfectly symmetric walls ($\gamma=\gamma_-=\gamma_+$) the two solutions in group 3 collapse to the same value. In this special case, a corrected expression can be derived. Let $\hat{q}=(\gamma+\epsilon)/\pi$ for $|\epsilon|\ll1$, where $\epsilon=0$ yields the original solution. By applying Equation~\eqref{eq:main_condition} one can readily obtain $\epsilon^2 \approx 4 \gamma^2 e^{2i(\gamma+\epsilon)}$. This equation can be solved in terms of the Lambert $W$ function \cite{mezo2022lambert}. Retaining only the leading-order term gives the corrected approximation $\hat{q}\approx \frac{\gamma}{\pi}(1\pm 2e^{i\gamma})$.

Additional groups can be derived formally; however, they are based on mutually inconsistent assumptions and therefore do not yield accurate approximations in practice. This is the case for the hypothetical group 2P (obtained from ii and $\neg\Xi$) as well as for the counterpart of group 3 corresponding to the regime $\operatorname{Im}(\gamma_\pm)\ll 0$.

\subsection{Evaluation of eigenfunction expansion}\label{sec:Evaluation of eigenfunction expansion}

Algorithm~\ref{alg:1} outlines the step-by-step procedure for evaluating the Green's function in 1D at a point $\mathbf{x}$ due to an impulse source located at $\mathbf{x}_0$, based on the results established in the preceding sections. Namely, first-order asymptotic approximations from Section \ref{sec:firstorderasymptotics} are used as effective initial choices for a root-finding algorithm that refines the solutions of Equation~\eqref{eq:main_condition}. The Newton-Raphson method is chosen due to its simplicity, as $v(\cdot)$ and its first derivative $v'(\cdot)$ can be obtained in closed form. Nevertheless, other root-finding algorithms may be adopted. The truncated EE in Equation~\eqref{eq:greenfunctionseries} is then applied, where $n_{\max}$ must at least be greater than $q := k l/\pi$, ensuring that the dominant terms in the series are included. The criterion from Proposition \ref{prop:imag} can possibly be included to discard solutions that have not converged.

The proposed method can be compared with earlier approaches in the literature; specifically, the well-known Morse first-order approximation \cite{morse_theoretical_1968} employs only the asymptotics from group 1, which, as discussed in Section~\ref{sec:firstorderasymptotics}, is accurate only for sufficiently hard walls. This limitation can be mitigated by applying the root-finding algorithm, as done in \cite{nolan_two_2019}, although highly inaccurate initial guesses may obstruct convergence. In contrast, by incorporating all asymptotic groups, suitable initial guesses are obtained for all solutions.

Although Algorithm~\ref{alg:1} is formulated for the one-dimensional case only, the SoV strategy introduced in Section~\ref{sec:problemformulationinrectangularrooms} readily extends the computation to multiple dimensions by applying the same construction independently in each coordinate direction. In particular, the total number of iterations becomes $(n_{\text{max}} + 1)^d$, where $d$ denotes the spatial dimension.

\begin{algorithm}[!ht]
\caption{Evaluation of Green's function (1D case).}
\label{alg:1}

\KwIn{$\beta_-,\beta_+\in\mathbb{C}$}
\KwIn{$l,k\in\mathbb{R}^+$}
\KwIn{$\mathbf{x},\mathbf{x}_0\in\Omega$}
\KwIn{$n_\text{max},n_\text{newton}\in\mathbb{N}$}
\KwIn{$\alpha_\text{newton},\varepsilon_\text{newton}\in\mathbb{R}^+$}

$\mathcal{Q}_+ \gets \emptyset$\;

\For{$n=0,\ldots,n_\text{max}$}{
    \ForEach{Group $\in \{1,2,3,1\mathrm{P}\}$}{
        \If{Assumption from Group (Table \ref{tab:groups})}{
            Calculate $\hat q_n$ from Table \ref{tab:groups}
            and add to $\mathcal{Q}_+$\;
        }
    }
}

\For{$n=0,\ldots,n_\text{newton}$}{
    $\mathcal{Q}_+
    \gets
    \mathcal{Q}_+
    -\alpha_\text{newton}
    \dfrac{v(\mathcal{Q}_+)}
          {v'(\mathcal{Q}_+)}$\;

    $\varepsilon
    \gets
    \|v(\mathcal{Q}_+)\|_1$\;
}

\If{$\varepsilon>\varepsilon_\text{newton}$}{
    Raise warning\;
}

Remove copies from $\mathcal{Q}_+$\;

Remove $\hat q=0$ from $\mathcal{Q}_+$\;

\If{Cardinality of $\mathcal{Q}_+ \neq n_\text{max}+1$
(Theorem~\ref{theo:1})}{
    Raise error\;
}

$G_k \gets 0$\;

\For{$n=0,\ldots,n_\text{max}$}{
    $\hat q_n \gets$ $n$-th element of $\mathcal{Q}_+$\;

    Get $\hat b_n$ from Equation~\eqref{eq:b}\;

    Calculate $\varphi_n$ at $\mathbf{x},\mathbf{x}_0$
    from Equation~\eqref{eq:generalsolution1D}\;

    Get $\Lambda_n$ from Equation~\eqref{eq:lambda}\;

    Calculate $G_{k,n}$ from
    Equation~\eqref{eq:greenfunctionseries}\;

    $G_k \gets G_k + G_{k,n}$\;
}

\end{algorithm}

\section{\label{sec:validityof}Validity of eigenfunction expansion}

The EE in Equation \eqref{eq:greenfunctionseries} lies on the assumption that eigenfunctions form an orthogonal and complete basis. While the spectral theorem guarantees these properties when $\beta=0$, it does not apply when $\beta\neq 0$ because the underlying operators are no longer self-adjoint. In this section, we therefore rigorously verify that the EE can still be used for non-rigid walls.

\subsection{Orthogonality of eigenfunctions}
We recall that eigenfunctions are real-valued only for $\beta=0$, therefore the analysis is carried out in $L^2(\Omega)\equiv L^2(\Omega;\mathbb{C})$ instead. Let 
$\langle \varphi_n,\varphi_m\rangle_{L^2(\Omega)} := \int_\Omega\varphi_n\overline{\varphi_m} \, d\mathbf{x}$ denote the standard Hermitian inner product, where $\overline{\varphi}$ is the complex conjugate of $\varphi$. It turns out that eigenfunctions are generally not Hermitian, i.e., $\langle \varphi_n,\varphi_m\rangle_{L^2(\Omega)}\neq 0$, however they satisfy
\begin{equation}\label{eq:pseudoorthogonality}
\langle \varphi_n,\overline{\varphi_m}\rangle_{L^2(\Omega)} =  \int_\Omega \varphi_n \varphi_m \, d\mathbf{x}= 0, \quad \text{for } n\neq m.
\end{equation}
This property was used in \cite{morse_theoretical_1968} without however an explicit demonstration. A proof was later provided for rectangular domains in \cite{nolan_two_2019} and subsequently extended to general domains in \cite{badeau_spectral_2025}. In what follows, we show that the property is in fact a consequence of a substantially more general mathematical result, from which a concise proof follows naturally.

\begin{theorem}\label{theo:orthogonality}
Let $T$ be a function space and $\mathcal{B} : T \times T \rightarrow \mathbb{C}$ a bilinear form. Suppose $\mathcal{L} : T \to T$ is a $\mathcal{B}$-self-adjoint operator, i.e., $\mathcal{B}(\mathcal{L}u_1, u_2) = \mathcal{B}(u_1, \mathcal{L}u_2)$, $\forall u_1, u_2 \in T$. 
Then, any two eigenfunctions $\varphi_1,\varphi_2$ of $\mathcal{L}$ corresponding to distinct eigenvalues are $\mathcal{B}$-orthogonal, i.e., $\mathcal{B}(\varphi_1, \varphi_2) = 0$.
\end{theorem}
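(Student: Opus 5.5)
The plan is to adapt the standard argument for self-adjoint operators, using the bilinear form $\mathcal{B}$ in place of a Hermitian inner product. Bilinearity rather than sesquilinearity is exactly what makes the argument work even when eigenvalues are complex, since no conjugation ever appears.

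Let $\mathcal{L}\varphi_1 = \lambda_1\varphi_1$ and $\mathcal{L}\varphi_2=\lambda_2\varphi_2$ with $\lambda_1\neq\lambda_2$. First I apply the $\mathcal{B}$-self-adjointness hypothesis with $u_1=\varphi_1$ and $u_2=\varphi_2$, giving $\mathcal{B}(\mathcal{L}\varphi_1,\varphi_2)=\mathcal{B}(\varphi_1,\mathcal{L}\varphi_2)$. Substituting the eigenvalue relations, linearity of $\mathcal{B}$ in the first argument gives $\lambda_1\mathcal{B}(\varphi_1,\varphi_2)$ on the left. Linearity in the second argument, with no conjugate, gives $\lambda_2\mathcal{B}(\varphi_1,\varphi_2)$ on the right. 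Subtracting yields
\begin{equation*}
(\lambda_1-\lambda_2)\,\mathcal{B}(\varphi_1,\varphi_2)=0,
\end{equation*}
and since $\lambda_1-\lambda_2\neq 0$ in $\mathbb{C}$, we conclude $\mathcal{B}(\varphi_1,\varphi_2)=0$.

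There is essentially no obstacle. The only point needing care is that $\mathcal{B}$ is genuinely bilinear, so the scalar $\lambda_2$ is extracted without conjugation; with a sesquilinear form one would obtain $\overline{\lambda_2}$ instead, and the argument would fail for complex eigenvalues. This is precisely why the paper's relation \eqref{eq:pseudoorthogonality} uses $\int_\Omega\varphi_n\varphi_m\,d\mathbf{x}$ rather than the Hermitian product. A further subtlety is that eigenfunctions must lie in the domain $T$ on which the self-adjointness identity holds. For the application to \eqref{eq:eigenvalueproblem}, $T$ would be the space of functions satisfying the Robin condition with fixed $k$ and $\beta$, on which Green's identity makes $\nabla^2$ symmetric with respect to $\mathcal{B}(u,w)=\int_\Omega uw\,d\mathbf{x}$. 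That verification belongs to the subsequent application rather than to this abstract statement.
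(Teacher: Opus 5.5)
Your proof is correct and follows essentially the same argument as the paper. The paper likewise chains $\lambda_n\mathcal{B}(\varphi_n,\varphi_m)=\mathcal{B}(\mathcal{L}\varphi_n,\varphi_m)=\mathcal{B}(\varphi_n,\mathcal{L}\varphi_m)=\lambda_m\mathcal{B}(\varphi_n,\varphi_m)$ and uses $\lambda_n\neq\lambda_m$ to conclude, relying, as you note, on bilinearity so that no conjugate of the eigenvalue appears.
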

\begin{proof}
    The proof is immediate and follows exactly the same passages adopted for the standard orthogonality. Namely, let $\lambda_n\neq\lambda_m$ be the eigenvalues associated to $\varphi_n,\varphi_m$, respectively. Then,
    \begin{equation*}
    \begin{aligned}
        \lambda_n\mathcal{B}(\varphi_n,\varphi_m) &= \mathcal{B}(\mathcal{L}\varphi_n, \varphi_m) \\ 
        &= \mathcal{B} (\varphi_n, \mathcal{L}\varphi_m) = \lambda_m \mathcal{B} (\varphi_n, \varphi_m),
    \end{aligned}
    \end{equation*}
    which implies that $ \mathcal{B} (\varphi_n, \varphi_m)=0$.
\end{proof}

Theorem~\ref{theo:orthogonality} is stated in a broad setting and applies whenever self-adjointness is defined with respect to a bilinear form, rather than a sesquilinear inner product. It applies in our context by choosing $\mathcal{B}(\cdot,\cdot)=\langle \cdot,\overline{\cdot}\rangle_{L^2(\Omega)}$. The resulting class of $\mathcal{B}$-self-adjoint operators is large, and in particular includes any operator of the form $\mathcal{L}=\sum_{j}\eta_j\mathcal{L}_j$, with $\eta_j\in\mathbb{C}$ and $\mathcal{L}_j$ Hermitian. It follows that the Helmholtz problem in Equation \eqref{eq:bvp} satisfies Equation~\eqref{eq:pseudoorthogonality} for any complex-valued parameters $k$ and $\beta$, i.e., even when surface admittance or dissipation through the medium are taken into account.

An alternative way to interpret this result is to observe that, for such a class of operators, the eigenfunctions of $\mathcal{L}$ and those of its adjoint $\mathcal{L}^*$ are complex conjugates of each other, i.e., $\overline{\varphi_n}=\varphi^*_n$, $\forall n\in\mathbb{N}$. Consequently, Equation~\eqref{eq:pseudoorthogonality} follows automatically by the biorthogonality property \cite{zhedanov1999biorthogonal}.

In conclusion, Equation~\eqref{eq:greenfunctionseries} requires no modification for $\beta\neq 0$ since it already accounts for the $\mathcal{B}$-orthogonality.

\subsection{Completeness of the spectral basis}
Completeness of the basis was long assumed to be true, and has only recently been examined in detail in \cite{badeau_spectral_2025}. The corresponding analysis is technical, and it requires the distinction between different definitions of completeness. We do not address these details here and instead refer the interested reader to the aforementioned work.

The central difficulty arises from the fact that the underlying operator is not self-adjoint, as discussed above, which obstructs the application of the spectral theorem. In general, a complete system may consist of both eigenfunctions and \emph{generalized} eigenfunctions \cite{keldysh_completeness_1971}, which are defined iteratively as
\begin{equation*}
    \mathcal{L}\varphi_{n,j} = \lambda_n \varphi_{n,j} + \varphi_{n,j-1}, \quad  j\in\mathbb{N},
\end{equation*}
where $\varphi_{n,0}=\varphi_n$ is the standard $n$-th eigenfunction. However, it can be seen that $\varphi_n$ admits associated generalized eigenfunctions only if
\begin{equation*}
\begin{aligned}
    \mathcal{B}(\varphi_n, \varphi_n) &= \mathcal{B}(\mathcal{L}\varphi_{n,1},\varphi_n) -  \mathcal{B}(\lambda_n\varphi_{n,1},\varphi_n) \\
    &= \mathcal{B}(\varphi_{n,1},\mathcal{L}\varphi_n) -  \mathcal{B}(\varphi_{n,1},\lambda_n\varphi_n) =0,
\end{aligned}
\end{equation*}
where we have used the assumption of Theorem~\ref{theo:orthogonality} on the second line. Notably, this condition is equivalent to $\Lambda_n=0$, implying a zero denominator in Equation~\eqref{eq:greenfunctionseries}.
We can then state the following.

\begin{theorem}\label{theo:completeness}
    The solutions $\varphi$ to the problem in Equation \eqref{eq:eigenvalueproblem} in the $d$-dimensional rectangular room with admittances $\beta_1,\beta_2,\dots,\beta_{2d}$ on each of the $2d$ walls form a complete basis for all admittance values except for a set $\Theta\subset \mathbb{C}^{2d}$ of zero Lebesgue measure.
\end{theorem}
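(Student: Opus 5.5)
Via separation of variables (Section~\ref{sec:problemformulationinrectangularrooms}), I would reduce the statement to the 1D problem along each axis. The eigenfunctions of the $d$-dimensional problem are tensor products $\prod_j \varphi_{n_j}(\mathbf{x}_j)$. If each 1D family is complete in $L^2$ of its interval, then the tensor products are complete in $L^2(\Omega)$, because the span of tensor products of complete systems is dense in $L^2(\Omega)=\bigotimes_j L^2(I_j)$. The exceptional set $\Theta$ can then be taken as the union over axes of $\Theta_j\times\mathbb{C}^{2d-2}$, where $\Theta_j\subset\mathbb{C}^2$ is the exceptional set for the pair $(\beta_-,\beta_+)$ on axis $j$. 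A finite union of sets of this form has zero measure in $\mathbb{C}^{2d}$ provided each $\Theta_j$ has zero measure in $\mathbb{C}^2$. So the task reduces to 1D with $\Theta_{1D}\subset\mathbb{C}^2$.

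For the 1D problem I would rely on the Keldysh-type completeness result. The operator $-\mathrm{d}^2/\mathrm{d}x^2$ with the Robin conditions is a regular Sturm--Liouville operator with strongly regular boundary conditions. Its resolvent is compact, and indeed the Green's function is an explicit meromorphic kernel. By the results in \cite{keldysh_completeness_1971} (cf.\ also \cite{badeau_spectral_2025}), the system of eigenfunctions together with generalized eigenfunctions is then complete. The remaining task is to show that, outside a null set, no generalized eigenfunctions exist.

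By the computation preceding the theorem, generalized eigenfunctions attached to $\varphi_n$ can exist only if $\Lambda_n=0$. Equivalently, $\hat q$ must be a multiple root of $v$: a Jordan chain of the Sturm--Liouville problem corresponds exactly to a zero of the characteristic function $v$ of order greater than one. Hence I would define
\[
\Theta_{1D}=\{(\beta_-,\beta_+)\in\mathbb{C}^2:\ \exists\,\hat q\ne 0 \text{ with } v(\hat q)=v'(\hat q)=0\},
\]
and additionally include the variety $\gamma_-\gamma_+=i(\gamma_-+\gamma_+)$ arising from the triple root at $0$ in Theorem~\ref{theo:1}. That variety is a complex-analytic hypersurface, so it has zero measure.

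\textbf{Main obstacle.} The hard step is showing that the set of parameters producing a nonzero multiple root has measure zero. I would eliminate $\hat q$ by parametrising with $z=\pi\hat q$ and $s=\gamma_-+\gamma_+$, $p=\gamma_-\gamma_+$ (for fixed $k,l$). The equations $v=0$ and $v'=0$ are linear in $(s,p)$:
\[
(z^2+p)\tan z-isz=0,\qquad 2z\tan z+(z^2+p)\sec^2 z-is=0.
\]
Generically in $z$ this $2\times 2$ linear system has a unique solution. The discriminant set is therefore the image of a holomorphic map $z\mapsto(s(z),p(z))$ from a one-complex-dimensional domain, minus a discrete set of singular $z$, into $\mathbb{C}^2$. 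It is a countable union of holomorphic curves, hence of real dimension $2$ inside $\mathbb{R}^4$, and so of Lebesgue measure zero. The map $(\beta_-,\beta_+)\mapsto(s,p)$ is a polynomial map that is locally biholomorphic off the diagonal $\gamma_-=\gamma_+$, and the diagonal itself is null. Preimages of null sets under such a map are therefore null. Care is needed where the linear system degenerates, namely where its determinant $z\sec^2 z-\tan z\,(\ldots)$ vanishes, and at the poles of $\tan z$, where I would use the $\sin/\cos$ form $v_1+v_2$ instead. These exceptional $z$ form a discrete set, each contributing at most a complex line, which is still null. Finally, I would note that $\Theta$ is at most countable union of analytic subvarieties, which also shows it is closed in compact regions.
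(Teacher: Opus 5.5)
Your proposal is correct. It shares the paper's skeleton: completeness of root functions via Keldysh, reduction to 1D, and exhibiting the parameters that admit generalized eigenfunctions as the image of a one-complex-parameter holomorphic family, hence a null set. The key elimination step, however, is done differently.

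\textbf{The paper's route.} It works with $\Lambda_n=0$ directly. It fixes $\hat q$, solves $\cos(2\hat b)=-\pi\hat q/\sin(\pi\hat q)$ for $\hat b$, and reads off $\gamma_-$ and $\gamma_+$ separately from the two boundary equations. The exceptional set is therefore parametrized in the $(\gamma_-,\gamma_+)$ coordinates themselves, and it is tied immediately to the vanishing denominator of the expansion.

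\textbf{Your route.} You impose the double-root conditions $V=V'=0$ on the entire function $V=v_1+v_2$, and these are linear in $(s,p)=(\gamma_-+\gamma_+,\gamma_-\gamma_+)$. Cramer's rule then gives a single-valued meromorphic parametrization whose degenerate set, $2z=\sin 2z$, is explicit and discrete. This sidesteps two things the paper's ``unique up to multiples of $\pi$'' passes over: the two-valuedness of $\hat b$ (the sign of the $\arccos$, which swaps $\gamma_-$ and $\gamma_+$) and the branch points. The cost is the pullback through $(\gamma_-,\gamma_+)\mapsto(s,p)$, which you handle correctly via the local biholomorphism off the diagonal.

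\textbf{Your ``equivalently'' step.} The equivalence between $\Lambda_n=0$ and a multiple root is true, and you can make it self-contained rather than citing it. Let $y(\cdot,\mu)$ solve $-y''=\mu y$, with $\mu=\hat k^2$, with $\mu$-independent initial data satisfying the left boundary condition. Set $\Delta(\mu)=y'(l/2,\mu)+ik\beta_+y(l/2,\mu)$. The bilinear (unconjugated) Green identity applied to $y$ and $\partial_\mu y$ gives $\int y^2\,dx=-y(l/2,\mu_n)\,\Delta'(\mu_n)$. Here $y(l/2,\mu_n)\neq 0$, since otherwise the boundary condition forces $y'(l/2,\mu_n)=0$ and hence $y\equiv 0$. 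So both routes define the same exceptional set.

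You are also more careful than the paper about the $\lambda=0$ case and the tensor-product passage to $d>1$, both of which the paper leaves implicit. One small point: your closing claim that a countable union of analytic subvarieties is closed on compact sets does not follow in general. It is also not needed for the theorem, so drop it.
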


\begin{proof}
    The set $\Theta$ in 1D is identified by the solutions of
    \begin{equation*}
        \begin{cases}
            \sin(\pi \hat{q})\cos\left(\hat{b}\right) = -\pi \hat{q}, \\
            \frac{\pi}{2}\hat{q} \tan\left(\frac{\pi}{2}\hat{q}-\hat{b}\right) = i\gamma_-, \\
            \frac{\pi}{2}\hat{q} \tan\left(\frac{\pi}{2}\hat{q}+\hat{b}\right) = i\gamma_+,
        \end{cases}
    \end{equation*}
    where the first equation follows by applying $\Lambda = \langle \varphi,\overline{\varphi}\rangle_{L^2(\Omega)}=0$ from Equation~\eqref{eq:lambda}, and the remaining two equations arise as intermediate steps in the derivation of \eqref{eq:main_condition}. \\
    Take any $\hat{q}\in\mathbb{C}$, then the associated $\hat{b}$ is unique from the first equation except for multiples of $\pi$. It follows from the other two equations that the associated $\gamma_-,\gamma_+$ are unique. Furthermore, the map is holomorphic. Consequently, the set of admissible $\beta_-,\beta_+$ is contained in the image of a differentiable function from $\mathbb{C}$ to $\mathbb{C}^2$, which has zero measure \cite{tao2011introduction}. The extension to $d>1$ readily follows.
\end{proof}

To summarize, we have verified that the orthogonality holds in a significantly larger setting and completeness holds for almost every admittance parameter. Therefore, Equation~\eqref{eq:greenfunctionseries} is guaranteed to hold as long as the problem parameters do not belong to the corresponding zero measure set.

\section{Auxiliary eigenvalue problem}\label{sec:Auxiliary eigenvalue problem}

In this section, we introduce and briefly study a variation of the original eigenvalue problem in Equation~\eqref{eq:eigenvalueproblem}. Specifically: find $\tilde{k}\in\mathbb{C}$ and $\phi\neq 0$ such that
\begin{equation}\label{auxiliaryproblem}
    \begin{cases}
    \nabla^2 \phi + \tilde{k}^2 \phi = 0, & \text{ in } \Omega, \\
        \nabla \phi \cdot \mathbf{n} + i \tilde{k} \beta  \phi = 0, & \text{ on } \partial \Omega.
    \end{cases}
\end{equation}
This problem is equivalent to Equation~\eqref{eq:eigenvalueproblem} except the two terms $k$ and $\hat{k}$ are now represented by the same unknown $\tilde{k}$. However, note that the auxiliary eigenvalue $\tilde{k}$ is in general different from both $k,\hat{k}$, since $\tilde{k}$ is a solution to an apparently similar but distinct problem. Furthermore, some of the properties of the original eigenfunctions, such as the $\mathcal{B}$-orthogonality and the SoV, do not hold in this context. Nevertheless, it is shown that the study of such problem can be still useful for the analysis of the Green's function in rectangular rooms.

 By assuming $\beta$ in Equation~\eqref{auxiliaryproblem} to be frequency-independent, both the problem and the solution $\tilde{k}$ are independent of the excitation wavenumber $k$. Consequently, unlike in the previous sections where $k$ was fixed, the analysis can be extended directly to any excitation frequency. The solution to the 1D auxiliary eigenvalue problem can be readily obtained by replacing both $k,\hat{k}$ in Equation~\eqref{eq:main_condition} with $\tilde{k}$, which yields:
\begin{equation}\label{eq:auxiliary_rectangular}
    \tilde{q} = \frac{1}{\pi} \arctan\left(i\frac{\beta_-+\beta_+}{1+\beta_-\beta_+}\right) + n, \quad n \in\mathbb{Z},
\end{equation}

where we denote $\tilde{q}=\tilde{k}l/\pi$ analogously to $\hat{q}$. Different from the original eigenvalue problem, the solution here is exact and can be obtained explicitly. Note that $\arctan(\cdot)$ has to be considered as the main branch of the associated multivalued function.
Finally, observe that there is no solution symmetry along the real axis, in contrast to the eigenvalues examined earlier.

It is convenient to analyze the real and imaginary components separately. From Equation~\eqref{eq:auxiliary_rectangular}, we obtain
\begin{equation}\label{eq:auxiliary_real}
     \text{Re}(\tilde{q}) = \dfrac{1}{2\pi}\arg(R_-R_+) +n, \quad  n\in\mathbb{Z},
\end{equation}
\begin{equation}\label{eq:auxiliary_imag}
    \text{Im}(\tilde{q}) = - \dfrac{1}{2\pi}\ln|R_-R_+|,
\end{equation}
where $R=(1-\beta)/(1+\beta)$ is the pressure reflection coefficient at the normal incidence angle. If $\tilde{q}$ is real-valued, it follows directly that the system admits an excitation wavenumber $k$ equal to one of its eigenvalues $\hat{k}$, resulting in a singular Green’s function. Consequently, $\text{Im}(\tilde{q})$ provides a meaningful measure of damping, which from Equation~\eqref{eq:auxiliary_imag} is null only for $|R_- R_+| = 1$. Furthermore, the damping is maximized when either $R_-$ or $R_+$ is equal to zero, corresponding as expected to perfectly absorbing 1D boundaries.

The real part $\text{Re}(\tilde{q})$ in Equation~\eqref{eq:auxiliary_real} is, on the other hand, particularly useful for identifying the locations of resonance frequencies in the transfer function. Indeed, the excitation wavenumber $k$ is closest to $\tilde{k}$ when evaluated at $\text{Re}(\tilde{k})$. When $k$ and $\tilde{k}$ are similar, then, by definition of $\tilde{k}$, they are expected to lie near an eigenvalue $\hat{k}$, thereby producing resonance. Finally, note that each value of $\text{Re}(\tilde{q})$ in Equation~\eqref{eq:auxiliary_real} is correctly confined to the interval $[n, n+1)$.

\section{\label{sec:numericaltests}Results}
In this section, we verify and validate the proposed methods using both numerical simulations and experimental measurements.. For all numerical experiments, we consider the speed of sound in air $c=343$~m/s. Codes have been implemented in \texttt{Python 3.12} with minimal dependencies, in particular relying on \texttt{NumPy 2.2.3} for vector computations \cite{harris2020array}. The implementation of Algorithm~\ref{alg:1} will be made available after the publication of this manuscript\footnote{https://github.com/dtu-act/green-function-rect-rooms.git}.

\subsection{Approximation of eigenvalues}
We begin by verifying the first-order approximations derived in Section~\ref{sec:firstorderasymptotics}. Therefore, we consider the 1D problem with length $l=1$~m, and $f= c k / (2\pi) = 5000$~Hz. Note that the choice of these parameters is largely arbitrary, since in Equation~\eqref{eq:main_condition} they affect only $\gamma_\pm = \beta_\pm k l$, which also depends on $\beta_\pm$. Accordingly, we fix these parameters and evaluate the quality of the approximations by varying $\beta_\pm$ instead.

In order to assess the accuracy of the first-order asymptotics, we consider the quantity $v_L:=\log_{10}(|v|+1)$, where $v(\hat{q})$ was introduced in Section \ref{sec:problemformulationinrectangularrooms}. The purpose of $v_L$ is to numerically estimate the solutions $\hat{q}$ by visualizing the contours on a more appropriate logarithmic scale. Here, $v_l(\hat{q})$ is evaluated from $\text{Re}(\hat{q})\in [0,8], \text{Im}(\hat{q})\in[0,5]$ on a uniform grid $500\times 800$. Three choices of admittance $(\beta_-,\beta_+)$ are selected, and contour plots are reported in Figure~\ref{fig:test1}. Exact solutions are identified by the red areas where $v_L=0$, while approximations are displayed with different markers according to each group. 

\begin{figure}[!ht]
\centering
\begin{subfigure}[h]{0.5\textwidth}
\includegraphics[width=\textwidth]{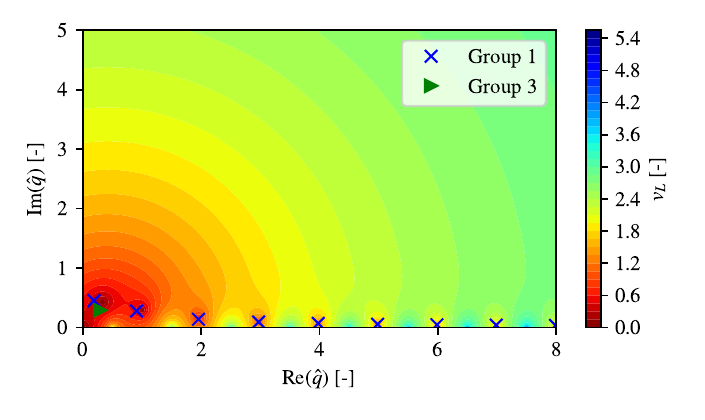}
\caption{(A) $\beta_-=0.01+0.01i$, $\;\; \beta_+=0.02$, $\;\; A_{2-1}\approx 0.49$.}
\end{subfigure}
\begin{subfigure}[h]{0.5\textwidth}
\includegraphics[width=\textwidth]{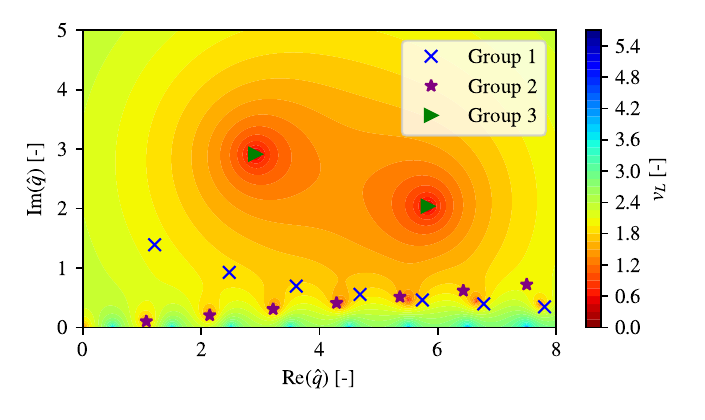}
\caption{(B) $\beta_-=0.1+0.1i$, $\;\; \beta_+=0.2+0.07i$, $\;\; A_{2-1}\approx 5.05$.}
\end{subfigure}
\begin{subfigure}[h]{0.5\textwidth}
\includegraphics[width=\textwidth]{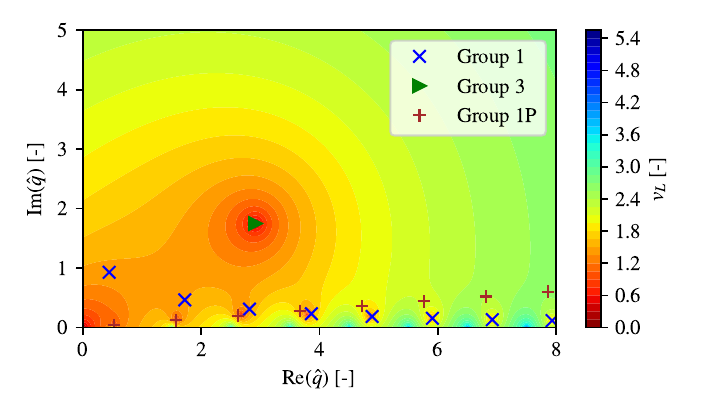}
\caption{(C) $\beta_-=0.1+0.06i$, $\; \beta_+=0$, $\; A_{2-1\text{P}}=0$, $\; A_{1\text{P}-1}\approx 3.40$.}
\end{subfigure}
\caption{Evaluation of eigenvalue first-order asymptotics from Section~\ref{sec:firstorderasymptotics} for different admittance values. (A) Hard walls. (B) Soft walls. (C) Highly asymmetric walls. Admittance and cutoff transition values are provided.}
\label{fig:test1}
\end{figure}

Figure~\ref{fig:test1}A displays a scenario characterizing hard walls. As expected, group 1 is sufficient to well represent all solutions. Group 3 is also added as the susceptance on the left wall is positive and equal to the conductance. However, it can be noted that this is redundant, as it overrides group 1 to approximate the solution closer to the origin. We recall that $\hat{q}=0$ is also present as the trivial solution. As expected, $A_{2-1}<1$, indicating that groups 2 should not be included. 

More absorbing walls are considered in Figure~\ref{fig:test1}B, where $A_{2-1}\approx 5.05$ indicates that group 2 should be included for $n\lesssim 5$ in place of group 1, as confirmed by the graphics. Moreover, group 3 is included twice, as both susceptances are not negligible and positive. Group 1P is instead omitted since $\beta_-$ and $\beta_+$ are of the same order.

A final scenario is shown in Figure~\ref{fig:test1}C, where one wall is weakly soft with positive susceptance, while perfect reflection is applied on the opposite side. In this case, group 1P should be included up to $n< A_{1-1\text{P}} \approx 3.40$, and $A_{2-1\text{P}} = 0$ indicates that group 2 should instead be disregarded, as confirmed by the plot.

Figure~\ref{fig:test1} additionally validates Theorem~\ref{theo:1}, as 9=8+1 solutions are visible in each of the three contour plots.

We emphasize again that Figure~\ref{fig:test1}A corresponds to scenarios with hard walls examined in previous studies \cite{morse_theoretical_1968, nolan_two_2019}, where group 1 was sufficient to approximate all eigenvalues. By contrast, increasing either the admittance order or the excitation frequency necessitates the inclusion of additional asymptotic groups, as illustrated in Figures~\ref{fig:test1}B and~\ref{fig:test1}C.

\subsection{Calculation of Green's function}
The eigenfunction expansion (EE) from Algorithm~\ref{alg:1} is here employed to calculate the Green's function. A 2D rectangular room of size $1.0$ $\times$ $1.4$~m is tested with excitation frequency $f=5000$ Hz, and the source point $\mathbf{x}_0$ is located at the coordinates [0.2~m, 0.2~m] with respect to the center of the room. Furthermore, we assign a different normalized surface admittance on each of the four walls: $\beta_{-,x}=0.09+0.03i, \beta_{+,x}=0.16, \beta_{-,y}=0.07+0.03i, \beta_{-,y}=0.12+0.12i$. The EE is employed with $n_\text{max}=160 \gg q$, as recommended in Section~\ref{sec:Evaluation of eigenfunction expansion}. In addition, $n_\text{newton}=100$, $\alpha_\text{newton}=0.3$ and the tolerance to identify solution copies or the trivial $\hat{q}=0$ is set to $\texttt{tol}=10^{-4}$.

We first qualitatively benchmark our algorithm against a reference solution obtained from a conventional high-resolution numerical simulation. Specifically, a FEM solver is implemented in \texttt{FreeFEM 4.15} \cite{hecht2012} using second-order Lagrange (P2) nodal elements. A uniform mesh with approximately 15 elements per wavelength (EPW) is employed, which is known to provide high accuracy \cite{Marburg2008}. This discretization results in $n_{DOF}=380{,}557$ degrees of freedom. To treat the Dirac delta source, a singularity-removal strategy is adopted (see for instance \cite{gjerde2020singularity, holtershinken2025local}). Namely, the Green's function is decomposed as $G_{k}(\cdot \mid \mathbf{x}_0) = \Psi_{k}(\cdot \mid \mathbf{x}_0) + G^0_{k}(\cdot \mid \mathbf{x}_0)$, where
\begin{equation*}
\Psi_{k}(\mathbf{x}|\mathbf{x}_0) = \begin{cases}
    \frac{i}{4}H^{(1)}_0(k|\mathbf{x}-\mathbf{x}_0|), & d=2, \\
    \frac{1}{4\pi|\mathbf{x}-\mathbf{x}_0|} e^{ik |\mathbf{x}-\mathbf{x}_0|}, & d=3,
\end{cases}
\end{equation*}
is the free-space fundamental solution of the Helmholtz equation, which captures the singularity at $\mathbf{x}_0$, and $H_0^{(1)}(\cdot)$ denotes the Hankel function of the first kind. Therefore, the FEM problem is addressed to find only the singularity-free component $G^0_{k}$.

The comparison of the two computed solutions is displayed in Figure~\ref{fig:test2}, where they appear nearly identical, with only a minimal discrepancy at the singular source point. Interestingly, the effect of SoV in the eigenfunction expansion is visible with an increased error on the horizontal and vertical axes in correspondence of $\mathbf{x}_0$.

\begin{figure}[!ht]
\centering
\begin{subfigure}[h]{0.65\textwidth}
\includegraphics[width=\textwidth]{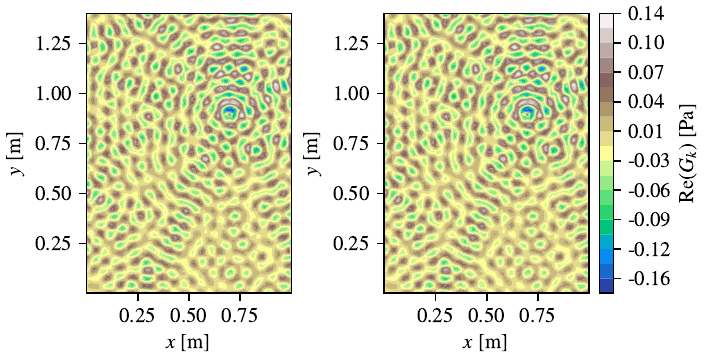}
\caption{(A) $\text{Re}(G_{k})$ (EE). \hspace{12mm} (B) $\text{Re}(G_{k})$ (FEM).}
\end{subfigure}
\begin{subfigure}[h]{0.65\textwidth}
\includegraphics[width=\textwidth]{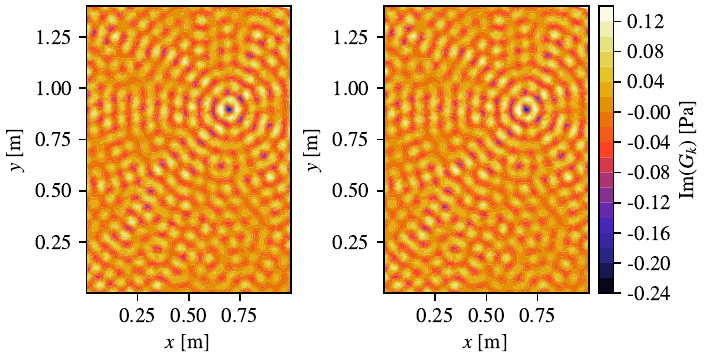}
\caption{(C) $\text{Im}(G_{k})$ (EE). \hspace{12mm} (D) $\text{Im}(G_{k})$ (FEM).}
\end{subfigure}
\begin{subfigure}[h]{0.44\textwidth}
\includegraphics[width=\textwidth]{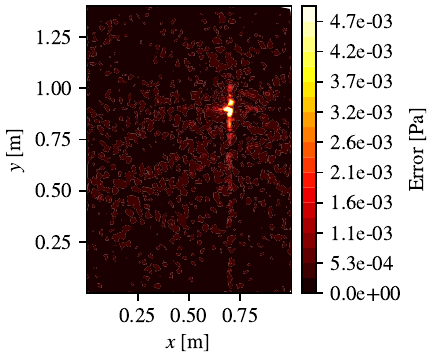}
\caption{(E) Absolute value error.}
\end{subfigure}
\caption{Comparison of the EE (Algorithm~\ref{alg:1}) and FEM simulation. The same color map is applied on each row. The two solutions appear considerably similar with a slightly larger error at the source singular location.}
\label{fig:test2}
\end{figure}

To verify that the EE can serve as a benchmark for numerical simulations, a FEM grid refinement study is carried out by progressively increasing the number of EPW from 3 to 15. We then consider the $L^2(\Omega)$ relative error:
\begin{equation*}
    \mathcal{E}(p,p_{ref}) := \sqrt{\frac{\int_\Omega |p-p_{ref}|^2 \, d\mathbf{x}}{\int_\Omega |p_{ref}|^2 \, d\mathbf{x}}},
\end{equation*}
where the integrals are approximated by evaluating the terms on a large number ($10^4$) of uniformly sampled points in $\Omega$, excluding those within a distance smaller than 0.05 from $\mathbf{x}_0$. The convergence to the EE solution is shown in Figure \ref{fig:test2convergence}A, where a third-order rate is observed, in agreement with the classical theory given by the Aubin-Nitsche lemma \cite{ciarlet2002finite}). The error relative to the Morse solution is also shown, highlighting its inadequacy for this problem. This outcome is expected, since $A_{2-1} \approx 5$ along both axes, indicating that the inclusion of group 2 is necessary.

\begin{figure}[!ht]
    \centering
    \begin{subfigure}[h]{0.5\textwidth}
        \includegraphics[width=\textwidth]{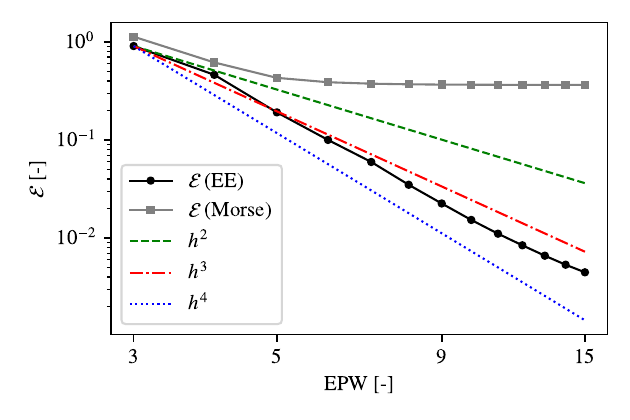}
        \caption{(A)}
    \end{subfigure}
    \begin{subfigure}[h]{0.5\textwidth}
        \includegraphics[width=\textwidth]{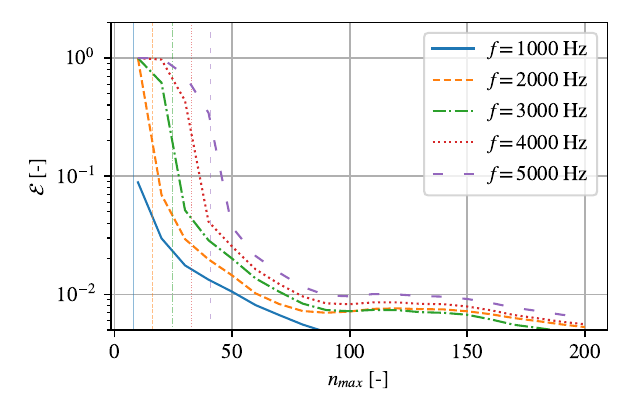}
        \caption{(B)}
    \end{subfigure}
\caption{Convergence studies: (A) FEM grid refinement converging to the EE solution at $f=5000$ Hz, $n_\text{max}=200$. Third-order is observed. (B) Convergence of the EE for increasing truncation order $n_\text{max}$ towards the FEM solution at EPW=15. Vertical lines are placed in correspondence of $q$, which as expected coincide with the faster convergence rate.}
\label{fig:test2convergence}
\end{figure}

We additionally aim to assess the influence of the truncation order $n_\text{max}$ and the problem frequency on the overall accuracy. Errors are then computed between the evaluations from the EE and the high-fidelity FEM solution (EPW=15) for different values of $n_{\max}$ and $f$. The results are shown in Figure~\ref{fig:test2convergence}B, indicating robust performance across a range of frequencies, with slightly faster convergence for lower ones. It is also readily observed that each curve exhibits a pronounced change in slope around $n_{\max} = q$, as expected from the discussion in Section~\ref{sec:Evaluation of eigenfunction expansion}.

A fair comparison of computational cost is challenging because of the fundamentally different methodologies and implementations (compiled versus interpreted language). Nevertheless, the total number of floating point operations in Algorithm~\ref{alg:1} is approximately $ 10(n_{\max}+1)^d \approx 4 \times 10^5 $, with $d = 2$ and $n_{\max} =200$. This is of the same order as $n_{\mathrm{DOF}}$ for EPW=15. We therefore conclude that, in this respect, the EE approach is significantly more efficient than a traditional FEM implementation, both in terms of memory usage and computational time. This advantage is expected to become even more pronounced as room size or frequency increases.

\subsection{Analysis of auxiliary eigenvalues}
We dedicate a brief section to verify the results from Section~\ref{sec:Auxiliary eigenvalue problem} on auxiliary eigenvalues. Specifically, we consider a 1D domain of length $l=1$~m, with a source at $\mathbf{x}_0=0.2$~m and a receiver at $\mathbf{x}=0.8$~m. The EE is used on a frequency range 500~Hz $\le f\le$ 3000~Hz with a resolution of 5~Hz. The sound pressure level (SPL) is then evaluated as SPL = $20 \log_{10}(|G_k(\mathbf{x})|/p_0)$, where $G_k(\mathbf{x})$ is the computed Green's function from Algorithm~\ref{alg:1} at the point $\mathbf{x}$ and $p_0=2\cdot 10^{-5}$ Pa is the reference sound pressure in air.

\begin{figure}[!ht]
\centering
\begin{subfigure}[h]{0.5\textwidth}
\includegraphics[width=\textwidth]{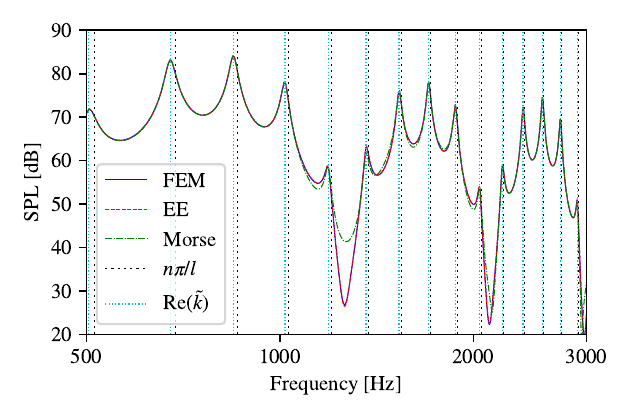}
\caption{(A) $\beta_-=\beta_+=0.1+0.1i$.}
\end{subfigure}
\begin{subfigure}[h]{0.5\textwidth}
\includegraphics[width=\textwidth]{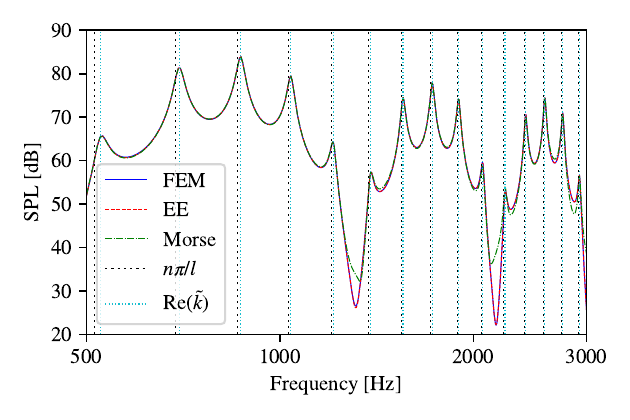}
\caption{(B) $\beta_-=\beta_+=0.1-0.1i$.}
\end{subfigure}
\begin{subfigure}[h]{0.5\textwidth}
\includegraphics[width=\textwidth]{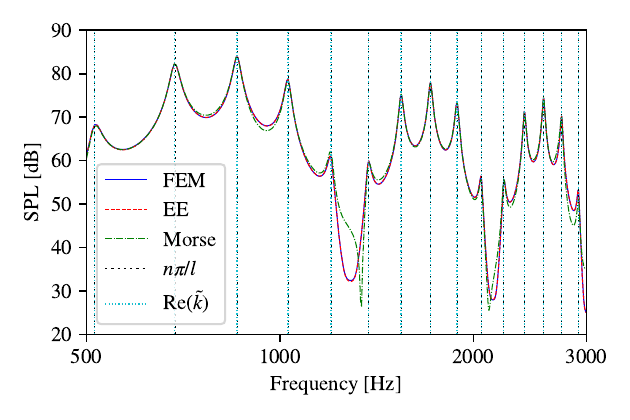}
\caption{(C) $\beta_-=0.1-0.1i, \quad \beta_+=0.1+0.1i$.}
\end{subfigure}
\caption{Comparison of Green's function and analytical auxiliary eigenvalues from Equation~\eqref{eq:auxiliary_real}.}
\label{fig:test3}
\end{figure}

The SPL is plotted in Figure~\ref{fig:test3} for three different choices of $\beta_-,\beta_+$, along with the $\text{Re}(\tilde{q})$ values from Equation~\eqref{eq:auxiliary_real}. As expected, the vertical lines align with the peaks of the transfer function in all cases. To further examine how the peak locations depend on the boundary admittance, the eigenvalues for perfectly rigid boundaries (given by the classical formula $n\pi/l$) are also included. Figure~\ref{fig:test3}A exhibits a leftward shift, in contrast to the rightward shift observed in Figure~\ref{fig:test3}B, while Figure~\ref{fig:test3}C shows an exact overlap. These outcomes are expected from Equation~\eqref{eq:auxiliary_real} since the condition $\beta_1=\beta_2$ leads to $\arg(R_1 R_2)>0$ if and only if $\text{Im}(\beta_1)=\operatorname{Im}(\beta_2)<0$ and the case $\beta_1=\overline{\beta_2}$ yields $\arg(R_1 R_2)=0$.

A high-fidelity FEM solution with EPW=15 and the Morse first-order approximation are also included. The EE solution and the FEM results show a perfect match, whereas the MM solution exhibits noticeable discrepancies again. This behavior is expected, as $A_{2-1}$ reaches values up to 4. Nevertheless, the MM solution remains accurate in the vicinity of the resonance peaks. This can be explained by observing that, near a peak, a single eigenvalue $\hat{k}$ lies very close to $k$ and therefore dominates the remaining contributions in the EE series. The corresponding index is $n \approx kl/\pi$, so that the validity condition for group 1, namely $n \gtrsim A_{2-1}$, reduces to $|\beta_-\beta_+| \lesssim 1$, which is satisfied in the present case.

It should be emphasized that the values selected for $\beta_-,\beta_+$ are primarily intended for illustrative purposes and for verifying the theory developed in Section~\ref{sec:Auxiliary eigenvalue problem}. A detailed investigation of materials exhibiting such admittance values is beyond the scope of this work.

\subsection{Comparison with measurements}
In this last experiment, we examine the method's ability to predict the impulse response in a real room by comparing it with measurements. 

The room under investigation is an approximately rectangular space displayed in Figure~\ref{fig:room}. Measurements were carried out using a National Instruments 4431 (National Instruments, Austin, TX) data acquisition system, with signal amplification provided by a HBK 2734 amplifier (HBK, Nærum, Denmark). Acoustic responses were recorded using a HBK 4191 microphone, connected through type 2690 conditioning amplifiers and a HBK NEXUS preamplifier. Excitation was provided by an exponential sine sweep sampled at 96 kHz from an HBK Type 4292-L omnidirectional loudspeaker, and the resulting transfer functions were computed via deconvolution with the inverse sweep followed by fast Fourier transform processing. 
\begin{figure}[ht]
    \centering
    \includegraphics[width=0.49\linewidth]{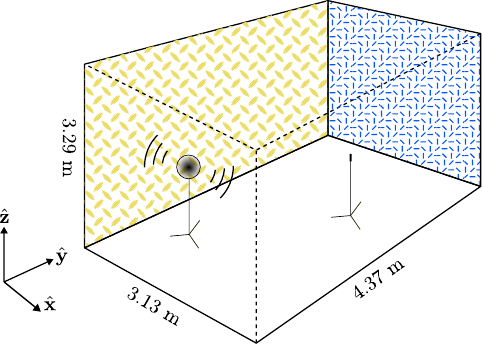}\hfill
    \includegraphics[width=0.45\linewidth]{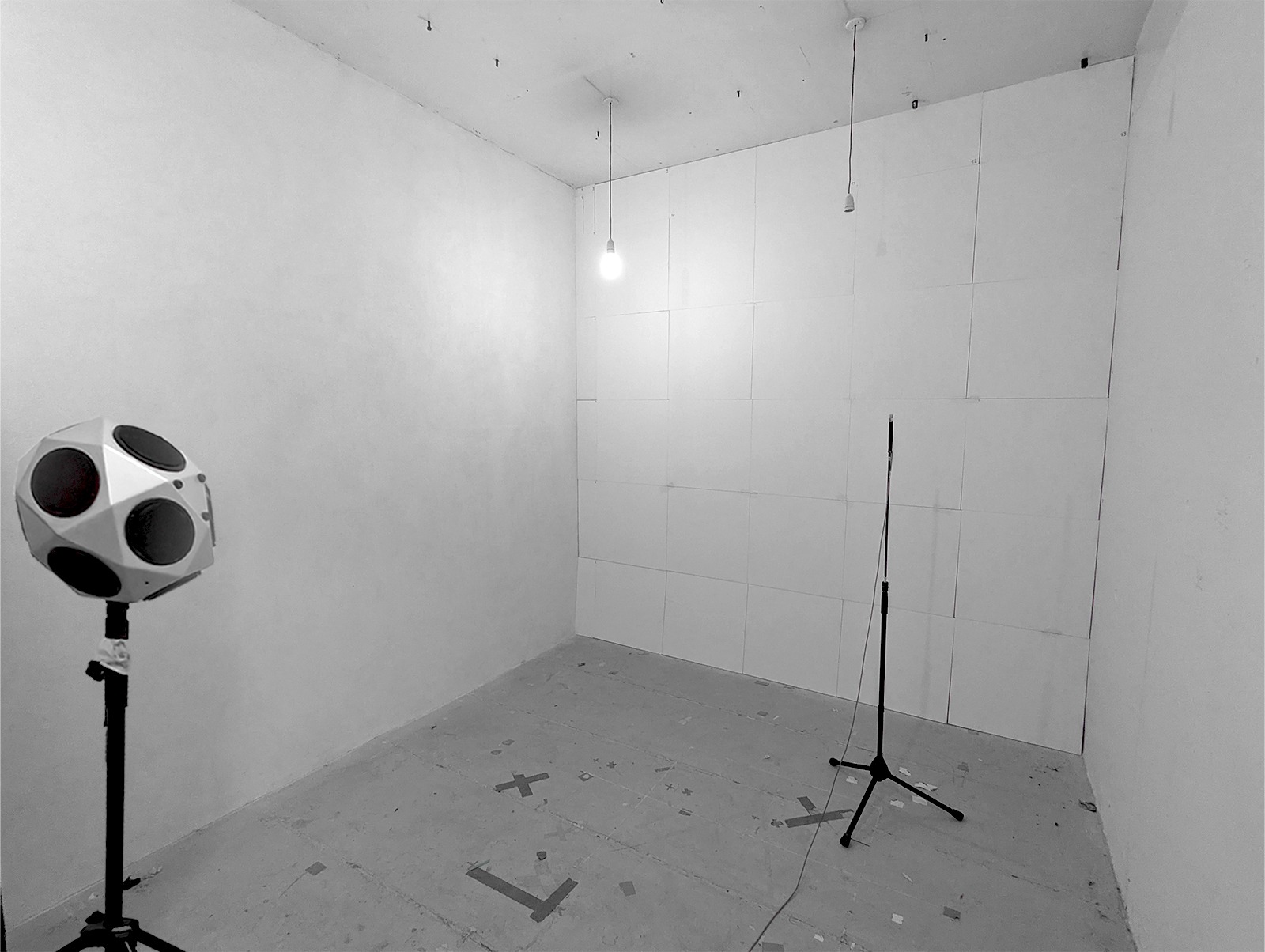}
    \caption{Sketch and photo of the acoustic measurement room. Ecophon Master A (blue) is placed at the wall in the $y$ direction opposite to the origin, Ecophon Industry Modus (yellow) is later added on the face corresponding to $x=0$.}
    \label{fig:room}
\end{figure}

The empty room walls consist of concrete, whose absorption coefficient was estimated using Eyring's formula \cite{eyring1930reverberation}, yielding the frequency-independent value of $\alpha_{\text{concrete}} = 0.015$. The admittance, $\beta \approx 4 \times 10^{-3}$, was directly derived from $\alpha_{\text{concrete}}$ under the assumption of negligible susceptance.

Two configurations are considered. In Config 1, Ecophon Master A panels (Ecophon, Hyllinge, Sweden) are installed on one of the six room walls without air gap (average $\alpha_{\text{Master}}\approx 0.60$). In Config 2, Ecophon Industry Modus panels (average $\alpha_{\text{Modus}}\approx 0.83$) are additionally mounted on a second wall, as illustrated in Figure~\ref{fig:room}. Config 1 is mildly absorptive overall, yielding a reverberation time of 2.25~s. Adding the second panel reduces the reverberation time to 0.72~s. We note that installing the panels modify the dimensions of the room, being 4 and 10 cm thick, respectively. The acoustic surface admittance of both panel types was measured \emph{in situ} using the device described in \cite{xia2026surface}.

Algorithm~\ref{alg:1} was applied using the same parameter settings as in the previous test. The resulting predictions are compared with the measurements in Figure~\ref{fig:test4} over the frequency range 150~Hz $\leq f \leq$ 600~Hz, with a frequency resolution of 0.1~Hz. The Morse approximated solution is included in the plots, as well as the Schroeder transition frequency \cite{schroeder1962frequency}.

\begin{figure}[!ht]
\centering
\begin{subfigure}[h]{0.5\textwidth}
\includegraphics[width=\textwidth]{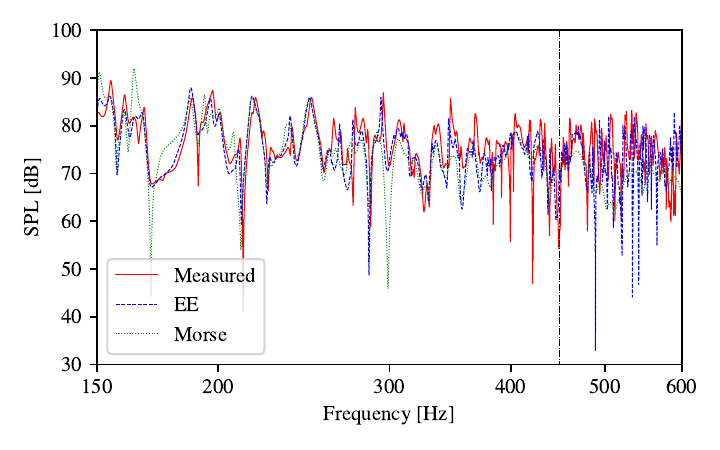}
\caption{(A) Config 1.}
\end{subfigure}
\begin{subfigure}[h]{0.5\textwidth}
\includegraphics[width=\textwidth]{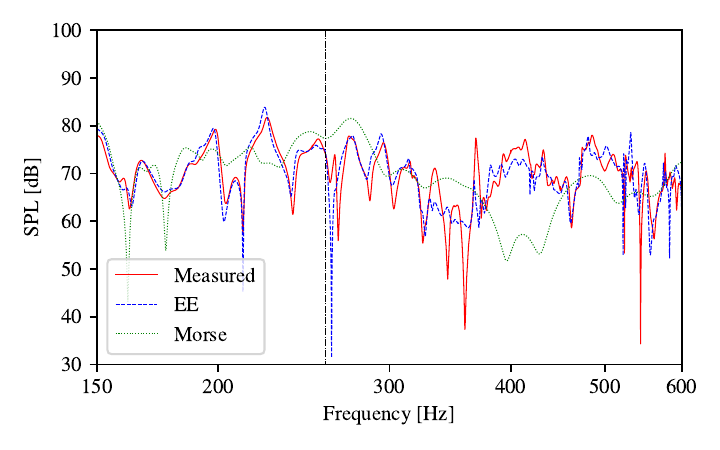}
\caption{(B) Config 2.}
\end{subfigure}
\caption{Comparison of measured and computed SPL in the rectangular room. The Schroeder frequency is marked as vertical line. (A) $\mathbf{x}=[2.00, 3.31,1.67]$ m, $\mathbf{x}_0=[1.32,1.10,1.89]$ m. (B) $\mathbf{x}=[1.49, 3.63,1.68]$ m, $\mathbf{x}_0=[1.87,1.73,1.88]$ m.}
\label{fig:test4}
\end{figure}

The proposed EE method shows good agreement with the measurements in both configurations, particularly below the Schroeder frequency. In contrast, the Morse solution exhibits substantially lower accuracy, especially in Config 2, where the higher level of absorption violates the assumptions of the approximation. 

To quantitatively assess the agreement between the solutions, we employ the frequency response assurance criterion (FRAC), a widely used similarity metric for comparing transfer functions \cite{heylen1996frac}. The FRAC ranges from 0 to 1, where a value of 1 indicates perfect shape similarity and a value of 0 corresponds to completely dissimilar responses. The resulting FRAC values are reported in Table~\ref{tab:frac}. Overall, the EE yields higher scores than the Morse model, particularly in Config 2. This behavior is expected, as the increased absorption leads to a smoother and more regular response, thereby improving the agreement between the solutions. Conversely, the Morse model performs slightly better in Config 1, as its underlying hard-wall assumption is more consistent with the room conditions.

\begin{table}[ht]
    \centering
    \begin{tabular}{c|c|c}
    \hline \hline
    Configuration & Method & FRAC (\%) \\
    \hline 
    Config 1 & EE & 54.1 \\
     & Morse & 41.7 \\
    \hline
    Config 2 & EE & 78.1 \\
     & Morse & 26.5 \\
    \hline \hline
    \end{tabular}
    \caption{Evaluation of FRAC with respect to measurements.}
    \label{tab:frac}
\end{table}

The analyzed frequency range could readily be extended, given the negligible computational cost of executing Algorithm~\ref{alg:1}. On the other hand, the estimated surface impedance of the panels is not sufficiently reliable below 150 Hz due to the limitation of the speaker, while analysis at higher frequencies is hindered by the increased density of peaks in the transfer function and measurement uncertainties arising from geometrical imperfections and other experimental limitations. Similar limitations have been reported in the literature; for example, \cite{luan2008method} observed good agreement with measurements only below 320~Hz.

Regarding the use of asymptotic groups from Section~\ref{sec:firstorderasymptotics},
the two configurations employ both group 1 and 1P due to the high axial asymmetry. In particular, $A_{1-1\text{P}}$ reaches up to the value of 7 at 600 Hz in the $y$ axis with the Master A panel, and the value of 8 in the $x$ axis when the Industry Modus panel is applied.

\section{\label{sec:conclusions}Conclusions}
This work investigates the use of eigenfunction expansions to construct the Green's function in rectangular rooms with surface admittance boundary conditions. Since closed-form solutions cannot be obtained, first-order asymptotic approximations are developed. Notably, four distinct families of asymptotics are identified, each corresponding to a different admittance regime. This allows the calculation of the Green's function for a wide range of admittance values, in contrast to previous studies that focused on a single family associated with sufficiently rigid walls. These first-order asymptotics provide effective initial choices for a Newton–Raphson scheme, allowing the accurate numerical computation of the eigenfunctions within a few iterations. The orthogonality and completeness of the resulting eigenfunction basis are subsequently examined: orthogonality is shown to hold in a very general setting, beyond the specific case of admittance boundary conditions for the Helmholtz problem, while completeness is proven for rectangular rooms. The extension of this result to more general settings remains instead an open problem (see, for instance, Open Problem 4.7 in \cite{bogli2022eigenvalues}). An auxiliary eigenvalue problem is further investigated, yielding additional insights on the original problem.

Numerical experiments first verify the proposed asymptotic expressions for the eigenfunctions, confirming that all four asymptotic families may be required in practice. The Green's function evaluation algorithm is then benchmarked against standard numerical solvers, demonstrating its superior accuracy and computational efficiency. Auxiliary eigenvalues are also computed and shown to coincide with peaks in the transfer function, with their locations shifting toward lower or higher frequencies depending on the admittance values. Finally, the computed Green’s function is compared with experimental measurements conducted in a 3D rectangular room, showing good agreement.

In conclusion, owing to its robustness and fast convergence, the proposed algorithm represents a powerful alternative to conventional numerical solvers and can also serve as a reliable benchmark for numerical simulations. The present analysis further provides a foundation for future work, such as the development of second-order asymptotic approximations or the extension of the completeness theorem to more general geometries.

\section*{Acknowledgments}
The authors gratefully acknowledge HBK (Nærum, Denmark) for providing access to measurement equipment used in this work. The authors would also like to thank Roland Badeau and Finn T. Agerkvist for valuable discussions.

\section*{Author declarations}
\subsection*{Conflict of interest}
The authors declare no conflict to disclose.

\section*{Data availability}
Data will be made available on request.

\bibliographystyle{abbrv}
\bibliography{sampbib}

\end{document}